\newtheorem{theorem}{Theorem}
\newtheorem{definition}[theorem]{Definition}
\newtheorem{remark}[theorem]{Remark}
\newtheorem{proposition}[theorem]{Proposition}
\newcommand{\qed}{\hfill$\square$}
\newenvironment{proof}{%
 \noindent{\em Proof.\ }}{%
 \hspace*{\fill}\qed \\
 \vspace{2ex}}
\DeclareMathAlphabet{\bm}{OML}{cmm}{b}{it}
\newcommand{\san}[1]{\mathsf{#1}}
\begin{document}
%
\title{On Sub-optimality of Random Binning for Distributed Hypothesis Testing}



%
\author{\IEEEauthorblockN{Shun Watanabe\IEEEauthorrefmark{1}}
\IEEEauthorblockA{\IEEEauthorrefmark{1}Department of Computer and Information Sciences, 
Tokyo University of Agriculture and Technology, Japan, \\
E-mail:shunwata@cc.tuat.ac.jp}}


\maketitle

\begin{abstract}
We investigate the quantize and binning scheme, known as the Shimokawa-Han-Amari (SHA) scheme,
for the distributed hypothesis testing.
We develop tools to evaluate the critical rate attainable by the SHA scheme.
For a product of binary symmetric double sources,
we present a sequential scheme that improves upon the SHA scheme.
\end{abstract}


%
\IEEEpeerreviewmaketitle

\section{Introduction} \label{section:introduction}

In \cite{berger:79b}, Berger introduced a framework of statistical decision problems under communication constraint.
Inspired by his work, many researchers studied various problems of this kind \cite{AhlCsi86, han:87, ZhaBer88, amari-han:89, amari:89, han-kobayashi:89, ShaPap92};
see \cite{han-amari:98} for a thorough review until 90s. More recently, the problem of communication constrained statistics has regained interest of researchers;
see \cite{Amari:11, ueta-kuzuoka:14, polyanskiy:12, RahWag:12, XiaKim:13, ZhaLai:18, Wat:18, WeiKoc:19, SalWigWan:19, HadLiuPolSha:19, HadSha:19, EscWigZai:20, Bur:20, SreGun:20, SahTya:21}
 and references therein.

In this paper, we are interested in the
most basic setting of hypothesis testing in which the sender and the receiver observe correlated sources $X^n$ and $Y^n$, respectively,
and the observation $(X^n,Y^n)$ is distributed according to the product of the null hypothesis $P_{XY}$ or the alternative
hypothesis $Q_{XY}$; the sender transmit a message at rate $R$, and the receiver decide the hypotheses based on the message and its observation.

In \cite{AhlCsi86}, Ahlswede and Csisz\'ar introduced the so-called quantization scheme, 
and showed that the quantization scheme is optimal for the testing against independence, i.e., 
the alternative hypothesis is $Q_{XY} = P_X\times P_Y$. In \cite{han:87}, Han introduced an improved 
version of the quantization scheme that fully exploit the knowledge of the marginal types. 
In \cite{ShiHanAma94}, Shimokawa, Han, and Amari introduced the quantize and binning scheme, which we shall
call the SHA scheme in the following. For a long time, it has not been known if the SHA is optimal or not.
In \cite{RahWag:12}, Rahman and Wagner showed that the SHA scheme is optimal for the testing against conditional 
independence.\footnote{In fact, for the testing against conditional independence, it suffices to consider a simpler version of quantize and binning scheme 
than the SHA scheme. } More recently, Weinberger and Kochman \cite{WeiKoc:19} (see also \cite{HaiKoch:16}) derived the optimal exponential
trade-off between the type I and type II error probabilities of the quantize and binning scheme. 
Since the decision is conducted based on the optimal likelihood decision rule in \cite{WeiKoc:19}, their scheme may potentially provide 
a better performance than the SHA scheme; however, since the expression involves complicated optimization over multiple parameters, and a strict improvement has 
not been clarified so far.

A main objective of this paper is to investigate if the SHA scheme is optimal or not. In fact, it turns out that the answer is negative,
and there exists a scheme that improves upon the SHA scheme. 
Since the general SHA scheme is involved, we focus on the critical rate, i.e., the minimum rate that is required to attain the same type II exponent
as the Stein exponent of the centralized testing. Then, under a certain non-degeneracy condition,
we will prove that the auxiliary random variable in the SHA scheme must coincide with the source $X$ itself, i.e., we must use the binning without quantization. 
By leveraging this simplification, for the binary symmetric double sources (BSDS), we derive a closed form expression of the critical rate attainable by the SHA scheme.
Next, we will consider a product of the BSDS such that each component of the null hypothesis and the alternative hypothesis is ``reversely aligned."
Perhaps surprisingly, it turns out that the SHA is sub-optimal for such hypotheses. Our improved scheme is simple, and it uses the SHA scheme 
for each component in a sequential manner; however, it should be emphasized that we need to conduct binning of each component separately,
and our scheme is not a naive random binning. 

\section{Problem Formulation}

We consider the statistical problem of testing the null hypothesis $\mathtt{H}_0: P_{XY}$ on finite alphabets ${\cal X}\times {\cal Y}$ versus the alternative 
hypothesis $\mathtt{H}_1: Q_{XY}$ on the same alphabet. The i.i.d. random variables $(X^n,Y^n)$ distributed according to either $P_{XY}^n$ or $Q_{XY}^n$
are observed by the sender and the receiver; the sender encodes the observation $X^n$ to a message by an encoding function
\begin{align*}
\varphi_n: {\cal X}^n \to {\cal M}_n,
\end{align*} 
and the receiver decides whether to accept the null hypothesis or not by a decoding function
\begin{align*}
\psi_n:{\cal M}_n \times {\cal Y}^n \to \{\mathtt{H}_0,\mathtt{H}_1 \}.
\end{align*}
When the block length $n$ is obvious from the context, we omit the subscript $n$.
For a given testing scheme $T_n = (\varphi,\psi)$, the type I error probability is defined by
\begin{align*}
\alpha[T_n] := P\bigg( \psi(\varphi(X^n),Y^n) = \mathtt{H}_1 \bigg)
\end{align*}
and the type II error probability is defined by 
\begin{align*}
\beta[T_n] := Q\bigg( \psi(\varphi(X^n),Y^n) = \mathtt{H}_0 \bigg).
\end{align*}
In the following, $P(\cdot)$ (or $Q(\cdot)$) means that $(X^n,Y^n)$ is distributed according to $P_{XY}^n$ (or $Q_{XY}^n$).

In the problem of distributed hypothesis testing, we are interested in the trade-off among the communication rate
and the type I and type II error probabilities. Particularly, we focus on
the optimal trade-off between the communication rate and the exponent of the type II error probability
when the type I error probability is vanishing, which is sometimes called ``Stein regime".
\begin{definition}
A pair $(R,E)$ of the communication rate and the exponent is defined to be achievable if there exists a sequence $\{T_n\}_{n=1}^\infty$
of testing schemes such that
\begin{align*}
\lim_{n\to\infty} \alpha[T_n] = 0
\end{align*}
and
\begin{align*}
\limsup_{n\to\infty} \frac{1}{n} \log |{\cal M}_n| &\le R, \\
\liminf_{n\to\infty} - \frac{1}{n} \log \beta_n[T_n] &\ge E.
\end{align*}
The achievable region ${\cal R}$ is the set of all achievable pair $(R,E)$. 
\end{definition}

For a given communication rate $R$, the maximum exponent is denoted by 
\begin{align*}
E(R) := \max\{ E : (R,E) \in {\cal R} \}.
\end{align*}
When there is no communication constraint, i.e., $R \ge \log |{\cal X}|$, the sender can send
the full observation without any encoding, and the receiver can use a scheme for the standard
(centralized) hypothesis testing. In such a case, the Stein exponent $D(P_{XY}\|Q_{XY})$ is attainable. 
We define the critical rate as the minimum communication rate required to attain the Stein exponent:
\begin{align*}
R_{\mathtt{cr}} := \inf\{ R : E(R) = D(P_{XY}\|Q_{XY}) \}.
\end{align*}

\section{On Evaluation of SHA Scheme}

When the communication rate $R$ is not sufficient to send $X$, the standard approach is to quantize $X$ using
a test channel $P_{U|X}$ for some finite auxiliary alphabet ${\cal U}$. For $P_{UXY} = P_{U|X}P_{XY}$
and $Q_{UXY} = Q_{U|X} Q_{XY}$ with $Q_{U|X} = P_{U|X}$, let
\begin{align*}
\lefteqn{ E(P_{UXY} \| Q_{UXY}) } \\ 
&:= \min\big\{ D(P_{\tilde{U}\tilde{X}\tilde{Y}} \| Q_{UXY}) 
: P_{\tilde{U}\tilde{X}} = P_{UX}, P_{\tilde{U}\tilde{Y}}=P_{UY} \big\}.
\end{align*}
In \cite{han:87}, a testing scheme based on quantization was proposed, and the following achievability bound was derived: for any test channel $P_{U|X}$, 
\begin{align*}
E(R) \ge E(P_{UXY} \| Q_{UXY}).
\end{align*}

In \cite{ShiHanAma94}, a testing scheme based on quantization and binning was proposed,
and the following achievability bound was derived: for any test channel $P_{U|X}$ such that $R \ge I(U \wedge X|Y)$, 
\begin{align*}
E(R) &\ge \min\bigg[ \min_{P_{\tilde{U}\tilde{X}\tilde{Y}} \in{\cal P}_{\mathtt{b}}(P_{U|X})} D(P_{\tilde{U}\tilde{X}\tilde{Y}} \| Q_{UXY}) \\
&~~~ ~~~+ | R - I(U \wedge X|Y)|^+, E(P_{UXY} \| Q_{UXY}) \bigg], 
\end{align*}
where $|t|^+ = \max[t,0]$, and 
\begin{align*}
{\cal P}_{\mathtt{b}}(P_{U|X}) &:= \big\{ P_{\tilde{U}\tilde{X}\tilde{Y}} : \\
&~~~ P_{\tilde{U}\tilde{X}} = P_{UX}, P_{\tilde{Y}} = P_Y, H(\tilde{U}|\tilde{Y}) \ge H(U|Y) \big\}.
\end{align*}
Particularly, by taking $P_{U|X}$ to be noiseless test channel, i.e., $U=X$, we have
\begin{align} \label{eq:b-X}
E(R) &\ge E_{\mathtt{b}}(R)  \\
&:= \min\bigg[ \min_{P_{\tilde{X}\tilde{Y}} \in {\cal P}_{\mathtt{b}}} D(P_{\tilde{X}\tilde{Y}} \| Q_{XY}) + |R - H(X|Y)|^+, \nonumber \\
&~~~~~~ D(P_{XY}\|Q_{XY}) \bigg], 
\end{align}
where 
\begin{align*}
{\cal P}_{\mathtt{b}} := \big\{ P_{\tilde{X}\tilde{Y}} : P_{\tilde{X}}=P_X, P_{\tilde{Y}}=P_Y, H(\tilde{X}|\tilde{Y}) \ge H(X|Y) \big\}.
\end{align*}

In the following, we focus on the critical rate. Under some mild conditions, in order to attain the Stein exponent,
we must take $U=X$ in the quantization-bining bound, i.e., no quantization. In the rest of this section, we prove this claim. 

Suppose that $P_{XY}$ and $Q_{XY}$ have full support. Let $\hat{\Lambda}$ be a function on ${\cal X} \times {\cal Y}$ defined by\footnote{We suppose that
alphabets are ${\cal X}=\{0,\ldots,m_\san{x}\}$ and ${\cal Y}=\{0,\ldots,m_\san{y}\}$; taking $y=0$ as a special symbol in \eqref{eq:LLR}
is just notational convenience.}
\begin{align} \label{eq:LLR}
\hat{\Lambda}(x,y) := \log \frac{P_{XY}(x,y)}{Q_{XY}(x,y)} - \log \frac{P_{XY}(x,0)}{Q_{XY}(x,0)}.
\end{align}

\begin{proposition} \label{prop:no-quantization}
Suppose that, for every $x\neq x^\prime$, rows $\hat{\Lambda}(x,\cdot)$ and $\hat{\Lambda}(x^\prime,\cdot)$ are distinct.
Then, $E(P_{UXY} \| Q_{UXY}) = D(P_{XY} \| Q_{XY})$ only if there does not exist $u\in {\cal U}$ and $x\neq x^\prime$ such that
$P_{U|X}(u|x) P_{U|X}(u|x^\prime) > 0$.
\end{proposition}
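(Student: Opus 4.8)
The plan is to prove the contrapositive. Assuming the rows of $\hat{\Lambda}$ are distinct, I will show that any ``merge'' — an $x\neq x^\prime$ and a $u$ with $P_{U|X}(u|x)P_{U|X}(u|x^\prime)>0$ — forces $E(P_{UXY}\|Q_{UXY}) < D(P_{XY}\|Q_{XY})$, so that $E(P_{UXY}\|Q_{UXY}) = D(P_{XY}\|Q_{XY})$ cannot hold. The starting point is the observation that the actual joint law $P_{UXY}$ is itself feasible for the minimization defining $E(P_{UXY}\|Q_{UXY})$, since it trivially matches the marginals $P_{UX}$ and $P_{UY}$, and that because $Q_{U|X}=P_{U|X}$ the channel cancels in the divergence, giving $D(P_{UXY}\|Q_{UXY}) = D(P_{XY}\|Q_{XY})$. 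Hence $E(P_{UXY}\|Q_{UXY}) \le D(P_{XY}\|Q_{XY})$ always, and equality holds exactly when the feasible point $P_{UXY}$ is a minimizer.

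So the decisive step is to characterize the minimizer. The feasible set is the affine family $\mathcal{L} := \{P_{\tilde U\tilde X\tilde Y} : P_{\tilde U\tilde X}=P_{UX},\, P_{\tilde U\tilde Y}=P_{UY}\}$, the objective $D(\cdot\|Q_{UXY})$ is strictly convex on the simplex over the common support $S := \{(u,x,y): P_{U|X}(u|x)>0\}$ (here I use that $P_{XY}$ and $Q_{XY}$ have full support, so $P_{UXY}$ and $Q_{UXY}$ share the support $S$), and $P_{UXY}$ is strictly positive on $S$, hence lies in the relative interior of $\mathcal{L}$ restricted to $S$. The first-order optimality condition then reads $\log\frac{P_{UXY}(u,x,y)}{Q_{UXY}(u,x,y)} = a(u,x)+b(u,y)$ on $S$, because the tangent space of $\mathcal{L}$ (signed measures with vanishing $(u,x)$- and $(u,y)$-marginals) has orthogonal complement exactly the functions of the form $a(u,x)+b(u,y)$. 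Cancelling the channel on the left gives, with $\Lambda(x,y):=\log\frac{P_{XY}(x,y)}{Q_{XY}(x,y)}$, the condition $\Lambda(x,y)=a(u,x)+b(u,y)$ whenever $P_{U|X}(u|x)>0$.

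The contradiction is then immediate. If $u$ merges $x$ and $x^\prime$, then $(u,x,y)$ and $(u,x^\prime,y)$ both lie in $S$ for every $y$ (again using full support of $P_{XY}$ and $Q_{XY}$), so $\Lambda(x,y)=a(u,x)+b(u,y)$ and $\Lambda(x^\prime,y)=a(u,x^\prime)+b(u,y)$; subtracting eliminates $b(u,y)$ and shows $\Lambda(x,y)-\Lambda(x^\prime,y)$ is constant in $y$. Evaluating at $y=0$ and subtracting yields $\hat{\Lambda}(x,y)=\hat{\Lambda}(x^\prime,y)$ for all $y$, i.e.\ the rows $\hat{\Lambda}(x,\cdot)$ and $\hat{\Lambda}(x^\prime,\cdot)$ coincide, contradicting the hypothesis.

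I expect the main obstacle to be making the first-order characterization airtight rather than the closing algebra. Specifically, I must justify that stationarity is a genuine equality on all of $S$, which is where the full-support assumption and the relative-interior position of $P_{UXY}$ are essential: they guarantee two-sided feasible perturbations and hence that $\log(P_{UXY}/Q_{UXY})$ is orthogonal to the entire tangent space. I would either invoke Csisz\'ar's characterization of $I$-projections onto linear families of fixed marginals, or verify it from scratch by differentiating $D(P_{UXY}+\epsilon D \,\|\, Q_{UXY})$ along tangent directions $D$ and using $\sum D=0$ to discard the additive constant in the gradient $\log(P_{UXY}/Q_{UXY})+1$.
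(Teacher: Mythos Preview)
Your proposal is correct and follows essentially the same route as the paper: both arguments observe that equality forces $P_{UXY}$ to be the $I$-projection of $Q_{UXY}$ onto the linear family with prescribed $(U,X)$- and $(U,Y)$-marginals, deduce the exponential-family form $\log\frac{P_{UXY}}{Q_{UXY}}=g_1(u,x)+g_2(u,y)$ on the common support, and then derive $\hat{\Lambda}(x,\cdot)=\hat{\Lambda}(x^\prime,\cdot)$ from a merge. The paper cites the $I$-projection theorem directly (Csisz\'ar--Shields) rather than spelling out the first-order condition, but the content is identical.
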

\begin{proof}
Since we assumed that $P_{XY}$ and $Q_{XY}$ have full support and since $P_{U|X}=Q_{U|X}$, 
$P_{UXY}$ and $Q_{UXY}$ have the same support, which we denote by ${\cal A}$.
Let ${\cal A}^\prime = \mathtt{supp}(P_{UX}) = \mathtt{supp}(Q_{UX})$. Note that $P_{UY}$ and $Q_{UY}$ have full support.
Let ${\cal L} \subseteq {\cal P}({\cal A})$ be the linear family of distributions $P_{\tilde{U}\tilde{X}\tilde{Y}}$ satisfying 
\begin{align*}
\sum_{(u,x,y) \in {\cal A}} P_{\tilde{U}\tilde{X}\tilde{Y}}(u,x,y) \delta_{\bar{u}\bar{x}}(u,x) = P_{UX}(\bar{u},\bar{x})
\end{align*}
for every $(\bar{u},\bar{x}) \in {\cal A}^\prime$ and 
\begin{align*}
\sum_{(u,x,y) \in {\cal A}} P_{\tilde{U}\tilde{X}\tilde{Y}}(u,x,y) \delta_{\bar{u}\bar{y}}(u,y) = P_{UY}(\bar{u},\bar{y}),
\end{align*}
for every $(\bar{u},\bar{y}) \in {\cal U} \times {\cal Y}$.
Then, we can write $E(P_{UXY}\|Q_{UXY})$ as 
\begin{align} \label{eq:minimization-problem}
E(P_{UXY}\|Q_{UXY}) = \min_{P_{\tilde{U}\tilde{X}\tilde{Y}} \in {\cal L}} D(P_{\tilde{U}\tilde{X}\tilde{Y}}\|Q_{UXY}).
\end{align}
If 
\begin{align*}
E(P_{UXY} \| Q_{UXY}) = D(P_{XY} \| Q_{XY}) = D(P_{UXY}\|Q_{UXY}), 
\end{align*}
then $P_{UXY}$ must be the (unique) minimizer in \eqref{eq:minimization-problem}. 
Then, the I-projection theorem (cf.~\cite[Theorem 3.2]{csiszar-shields:04book}) 
implies that the minimizer $P_{UXY}$ lies in the exponential family generated by 
$Q_{UXY}$ and the constraints of ${\cal L}$, i.e., it has of the form
\begin{align*}
\lefteqn{ P_{UXY}(u,x,y) }\\
&\propto Q_{UXY}(u,x,y) \exp\bigg[ \sum_{\bar{u},\bar{x}} \theta_{\bar{u}\bar{x}} \delta_{\bar{u}\bar{x}}(u,x) 
+ \sum_{\bar{u},\bar{y}} \theta_{\bar{u}\bar{y}} \delta_{\bar{u}\bar{y}}(u,y) \bigg]. 
\end{align*}
for some $\theta_{\bar{u}\bar{x}},\theta_{\bar{u}\bar{y}} \in \mathbb{R}$. Equivalently, this condition can be written as 
\begin{align*}
\log \frac{P_{UXY}(u,x,y)}{Q_{UXY}(u,x,y)} = g_1(u,x) + g_2(u,y)
\end{align*}
for some functions $g_1$ on ${\cal A}^\prime$ and $g_2$ on ${\cal U}\times {\cal Y}$. If there exist $u\in {\cal U}$ and $x\neq x^\prime$ such that
$P_{U|X}(u|x) P_{U|X}(u|x^\prime) > 0$, then, for every $y\in {\cal Y}$, we have
\begin{align*}
\lefteqn{ \hat{\Lambda}(x,y) - \hat{\Lambda}(x^\prime,y) } \\
&= \bigg[ \log\frac{P_{XY}(x,y)}{Q_{XY}(x,y)} - \log \frac{P_{XY}(x,0)}{Q_{XY}(x,0)} \bigg] \\
&~~~ - \bigg[ \log \frac{P_{XY}(x^\prime,y)}{Q_{XY}(x^\prime,y)} - \log \frac{P_{XY}(x^\prime,0)}{Q_{XY}(x^\prime,0)} \bigg] \\
&= \bigg[ \log\frac{P_{UXY}(u,x,y)}{Q_{UXY}(u,x,y)} - \log \frac{P_{UXY}(u,x,0)}{Q_{UXY}(u,x,0)} \bigg] \\
&~~~ - \bigg[ \log \frac{P_{UXY}(u,x^\prime,y)}{Q_{UXY}(u,x^\prime,y)} - \log \frac{P_{UXY}(u,x^\prime,0)}{Q_{UXY}(u,x^\prime,0)} \bigg] \\
&= \bigg[ g_1(u,x) + g_2(u,y) - g_1(u,x) - g_2(u,0) \bigg] \\
&~~~ - \bigg[ g_1(u,x^\prime) + g_2(u,y) - g_1(u,x^\prime) - g_2(u,0) \bigg] \\
&=0,
\end{align*}
which contradict the assumption that $\hat{\Lambda}(x,\cdot)$ and $\hat{\Lambda}(x^\prime,\cdot)$ are distinct. 
\end{proof}

Since the quantization-binning bound is at most $E(P_{UXY}\|Q_{UXY})$, when the assumption of Proposition \ref{prop:no-quantization} is satisfied, 
$U=X$ is the only choice of test channel to attain the Stein exponent.\footnote{Even though we can take 
a redundant test channel $P_{U|X}$ such that $\mathtt{supp}(P_{U|X}(\cdot|x)) \cap \mathtt{supp}(P_{U|X}(\cdot|x^\prime)) = \emptyset$
for any $x \neq x^\prime$, such a test channel does not improve the attainable exponent.} 

In Proposition \ref{prop:no-quantization},
the assumption that every raws of $\hat{\Lambda}(x,y)$ are distinct cannot be replaced by the assumption
that every raws of the log-likelihood function
\begin{align*}
\Lambda(x,y) = \log \frac{P_{XY}(x,y)}{Q_{XY}(x,y)}
\end{align*}
are distinct; for instance, when $P_{XY}$ and $Q_{XY}$ are given by
\begin{align*}
P_{XY} &= \frac{1}{24a} \left[
\begin{array}{ccc}
a & 2 a & 3a \\ 2a & 3a & 3a \\ a & 3a & 6a
\end{array}
\right], \\
Q_{XY} &= \frac{1}{14a}
\left[
\begin{array}{ccc}
2a & a & a \\ a & a & a \\ a & 2a & 4a
\end{array}
\right]
\end{align*}
for some $a >0$, then every raws $\Lambda(x,\cdot)$ are distinct but
$\hat{\Lambda}(1,\cdot) = \hat{\Lambda}(2,\cdot)$. In fact, as is shown in the following Proposition \ref{prop:merged},
the symbols $x=1$ and $x=2$ in this example can be merged while maintaining 
$E( P_{UXY} \| Q_{UXY}) = D(P_{XY} \| Q_{XY})$. Even though Proposition \ref{prop:merged} is not used in the rest
of the paper, it may be of independent interest. 

\begin{proposition} \label{prop:merged}
Let $\kappa:{\cal X} \to {\cal U}$ be a function such that $\kappa(x)=\kappa(x^\prime)$
if and only if $\hat{\Lambda}(x,\cdot)=\hat{\Lambda}(x^\prime,\cdot)$, and let $U = \kappa(X)$.
Then, for any $P_{\tilde{U}\tilde{X}\tilde{Y}}$ satisfying $P_{\tilde{U}\tilde{X}} = P_{UX}$ and $P_{\tilde{U}\tilde{Y}}=P_{UY}$, we have
\begin{align}
\lefteqn{ D(P_{\tilde{U}\tilde{X}\tilde{Y}} \| Q_{UXY}) } \nonumber \\
&= D(P_{\tilde{U}\tilde{X}\tilde{Y}} \| P_{UXY}) + D(P_{UXY} \| Q_{UXY}). \label{eq:Phytagorean}
\end{align}
Particularly, we have
\begin{align} \label{eq:no-loss}
E( P_{UXY} \| Q_{UXY}) = D(P_{XY} \| Q_{XY}).
\end{align}
\end{proposition}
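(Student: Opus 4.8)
The plan is to prove the Pythagorean identity \eqref{eq:Phytagorean} directly, which exhibits $P_{UXY}$ as the I-projection of $Q_{UXY}$ onto the linear family defined by the constraints $P_{\tilde U\tilde X}=P_{UX}$ and $P_{\tilde U\tilde Y}=P_{UY}$; the consequence \eqref{eq:no-loss} then drops out. First I would record what $U=\kappa(X)$ buys us: the common channel $Q_{U|X}=P_{U|X}$ is noiseless, so both $P_{UXY}$ and $Q_{UXY}$ are supported on ${\cal A}:=\{(u,x,y):u=\kappa(x)\}$ (here I use that $P_{XY},Q_{XY}$ have full support), and on ${\cal A}$ the indicator cancels in the ratio, giving $\log\frac{P_{UXY}(u,x,y)}{Q_{UXY}(u,x,y)}=\Lambda(x,y)$.

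The decisive step, which I expect to be the crux, is to show that this log-likelihood ratio admits an additive splitting $g_1(u,x)+g_2(u,y)$ on ${\cal A}$. Writing $\Lambda(x,y)=\Lambda(x,0)+\hat\Lambda(x,y)$, I would put $g_1(u,x):=\Lambda(x,0)$, which is a legitimate function of $(u,x)$ since it depends only on $x$, and $g_2(u,y):=\hat\Lambda(x,y)$ for an arbitrary $x$ with $\kappa(x)=u$. The content is that $g_2$ is well defined: by the defining property of $\kappa$, $\kappa(x)=\kappa(x^\prime)$ implies $\hat\Lambda(x,\cdot)=\hat\Lambda(x^\prime,\cdot)$, so the entire row $\hat\Lambda(x,\cdot)$ depends only on the class $u=\kappa(x)$ and not on the chosen representative. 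This is precisely where the hypothesis on $\kappa$ enters, and it is the heart of the argument.

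With the splitting in hand, I would use the elementary identity $D(P_{\tilde U\tilde X\tilde Y}\|Q_{UXY})-D(P_{\tilde U\tilde X\tilde Y}\|P_{UXY})=\sum_{(u,x,y)\in{\cal A}}P_{\tilde U\tilde X\tilde Y}(u,x,y)\big[g_1(u,x)+g_2(u,y)\big]$, which is valid because $P_{\tilde U\tilde X}=P_{UX}$ forces $\mathtt{supp}(P_{\tilde U\tilde X\tilde Y})\subseteq{\cal A}$. Since $g_1$ depends only on $(u,x)$ and $g_2$ only on $(u,y)$, the marginal constraints $P_{\tilde U\tilde X}=P_{UX}$ and $P_{\tilde U\tilde Y}=P_{UY}$ permit replacing $P_{\tilde U\tilde X\tilde Y}$ by $P_{UXY}$ in this expectation without changing its value; hence it equals $\sum_{(u,x,y)}P_{UXY}(u,x,y)\big[g_1(u,x)+g_2(u,y)\big]=D(P_{UXY}\|Q_{UXY})$, which is exactly \eqref{eq:Phytagorean}.

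Finally, \eqref{eq:no-loss} follows immediately: the term $D(P_{\tilde U\tilde X\tilde Y}\|P_{UXY})$ is nonnegative and vanishes at the feasible choice $P_{\tilde U\tilde X\tilde Y}=P_{UXY}$, so \eqref{eq:Phytagorean} yields $E(P_{UXY}\|Q_{UXY})=D(P_{UXY}\|Q_{UXY})$; and a one-line computation, in which the deterministic factor $\mathbf{1}[\kappa(x)=u]$ cancels in the ratio and sums out, gives $D(P_{UXY}\|Q_{UXY})=D(P_{XY}\|Q_{XY})$. The only delicate part is the support bookkeeping (full support of $P_{XY},Q_{XY}$ and the inclusion $\mathtt{supp}(P_{\tilde U\tilde X\tilde Y})\subseteq{\cal A}$) that keeps every relative entropy and the splitting identity well defined; the genuine mathematical content is the well-definedness of $g_2$.
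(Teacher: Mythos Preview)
Your proposal is correct and follows essentially the same route as the paper: both arguments rest on writing $\log\frac{P_{UXY}}{Q_{UXY}}=\Lambda(x,y)=\Lambda(x,0)+\hat\Lambda(x,y)$ on the support ${\cal A}$, observing that the hypothesis on $\kappa$ makes $\hat\Lambda(x,y)$ a function of $(\kappa(x),y)$ alone, and then using the marginal constraints $P_{\tilde U\tilde X}=P_{UX}$ and $P_{\tilde U\tilde Y}=P_{UY}$ to kill the cross term. The only cosmetic difference is packaging: you name the splitting $g_1(u,x)+g_2(u,y)$ explicitly and invoke the I-projection viewpoint, whereas the paper fixes a representative $x_u$ with $\kappa(x_u)=u$ and carries out the same cancellation as a direct chain of equalities on $\sum(P_{\tilde U\tilde X\tilde Y}-P_{UXY})\log\frac{P_{UXY}}{Q_{UXY}}$.
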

\begin{proof}
Since \eqref{eq:no-loss} follows from \eqref{eq:Phytagorean}, it suffices to prove \eqref{eq:Phytagorean}.
For each $u$, fix $x_u\in {\cal X}$ such that $\kappa(x_u)=u$. Then, we have
\begin{align*}
\lefteqn{ D(P_{\tilde{U}\tilde{X}\tilde{Y}} \| Q_{UXY} ) - D(P_{\tilde{U}\tilde{X}\tilde{Y}} \| P_{UXY}) - D(P_{UXY} \| Q_{UXY}) } \\
&= \sum_{u,x,y} \big( P_{\tilde{U}\tilde{X}\tilde{Y}}(u,x,y) - P_{UXY}(u,x,y) \big) \log \frac{P_{UXY}(u,x,y)}{Q_{UXY}(u,x,y)} \\
&= \sum_{u,x,y} \big( P_{\tilde{U}\tilde{X}\tilde{Y}}(u,x,y) - P_{UXY}(u,x,y) \big) \log \frac{P_{XY}(x,y)}{Q_{XY}(x,y)} \\
&= \sum_{u,x,y} \big( P_{\tilde{U}\tilde{X}\tilde{Y}}(u,x,y) - P_{UXY}(u,x,y) \big) \hat{\Lambda}(x,y) \\
&= \sum_{u,x,y} \big( P_{\tilde{U}\tilde{X}\tilde{Y}}(u,x,y) - P_{UXY}(u,x,y) \big) \hat{\Lambda}(x_u,y) \\
&= \sum_{u,y} \big( P_{\tilde{U}\tilde{Y}}(u,y) - P_{UY}(u,y) \big) \hat{\Lambda}(x_u,y) \\
&=0,
\end{align*}
where the third equality follows from $P_{\tilde{X}}=P_X$ and the last equality follows from $P_{\tilde{U}\tilde{Y}}=P_{UY}$.
\end{proof}

\begin{remark}
Introducing the function $\Lambda(x,y)$ as in \eqref{eq:LLR} is motivated by the problem of distributed function computing.
Note that, in order to conduct the likelihood ratio test, the receiver need not compute the
log-likelihood ratios $(\Lambda(x_i,y_i) : 1 \le i \le n)$ for a given observation $(\bm{x},\bm{y}) \in {\cal X}^n \times {\cal Y}^n$; 
instead, it suffices to compute the summation
\begin{align*}
\Lambda_n(\bm{x},\bm{y}) = \sum_{i=1}^n \Lambda(x_i,y_i).
\end{align*}
In the terminology of distributed computing \cite[Definition 4]{KuzWat16}, 
we can verify that the function $\Lambda_n$ is 
$\overline{{\cal X}}_{\hat{\Lambda}}$-informative. 
\end{remark}

\section{Binary Symmetric Double Source (BSDS)}

Let ${\cal X}={\cal Y} = \{0,1\}$, and let $P_{XY}$ and $Q_{XY}$ be given by
\begin{align*}
P_{XY} = \left[
\begin{array}{cc}
\frac{1-p}{2} & \frac{p}{2} \\
\frac{p}{2} & \frac{1-p}{2}
\end{array}
\right],~~~
Q_{XY} = \left[
\begin{array}{cc}
\frac{1-q}{2} & \frac{q}{2} \\
\frac{q}{2} & \frac{1-q}{2}
\end{array}
\right]
\end{align*}
for $0\le p,q \le 1$ and $p \neq q$.
Since $\hat{\Lambda}(0,0)=\hat{\Lambda}(1,0)=0$ and 
\begin{align*}
\hat{\Lambda}(0,1) = \log \frac{(1-q)p}{(1-p)q} \neq  \log \frac{(1-p)q}{(1-q)p} = \hat{\Lambda}(1,1),
\end{align*}
the assumption of Proposition \ref{prop:no-quantization} is satisfied. Thus, we need to set $U=X$ in the quantization-binning bound to attain the Stein exponent.
In the minimization of \eqref{eq:b-X}, since $P_{\tilde{X}}=P_X$ and $P_{\tilde{Y}}=P_Y$ are the uniform distribution, the only free parameter is the crossover probability
$P_{\tilde{Y}|\tilde{X}}(1|0)=P_{\tilde{Y}|\tilde{X}}(0|1)=\tilde{p}$. Then, we can easily find that the binning bound for BSDS is
\begin{align}  
\lefteqn{ E_{\mathtt{b}}(R) } \nonumber \\ 
&=  \left\{
\begin{array}{ll}
\min[ |R - h(p)|^+, D(p\|q) ] & \mbox{ if } h(p) \le h(q) \\
D(p\| q) & \mbox{ if } h(p) > h(q)
\end{array}
\right. \label{eq:b-Exponent-BSDS}
\end{align} 
for $R \ge h(p)$. Furthermore, this bound implies that the critical rate attainable by the SHA scheme is 
\begin{align} \label{eq:b-critical-BSDS}
R_{\mathtt{cr}} \le  \left\{
\begin{array}{ll}
h(p) + D(p\|q) & \mbox{ if } h(p) \le h(q) \\
h(p) & \mbox{ if } h(p) > h(q)
\end{array}
\right..
\end{align}

\section{Product of BSDS} \label{section:product-BSDS}

Next, let us consider the case where $P_{XY} = P_{X_1 Y_1} \times P_{X_2 Y_2}$
and $Q_{XY} = Q_{X_1 Y_1} \times Q_{X_2 Y_2}$, and $P_{X_i Y_i}$ and $Q_{X_i Y_i}$ for $i=1,2$
are DSBSs with parameters $0 \le p_1,p_2,q_1,q_2 \le 1$, respectively. Particularly, we consider the case where $p_1 = q_2$,
$p_2 = q_1$, where $p_1 \neq q_1$. Note that, for the product source, the function defined by \eqref{eq:LLR} can be decomposed as
\begin{align*}
\hat{\Lambda}(x_1x_2,y_1y_2) = \hat{\Lambda}_1(x_1,y_1) + \hat{\Lambda}_2(x_2,y_2),
\end{align*}
where
\begin{align*}
\hat{\Lambda}_i(x_i,y_i) = \log \frac{P_{X_i Y_i}(x_i,y_i)}{Q_{X_iY_i}(x_i,y_i)} - \log \frac{P_{X_i Y_i}(x_i,0)}{Q_{X_iY_i}(x_i,0)}.
\end{align*}
Furthermore, we have $\hat{\Lambda}_1(1,1) = - \hat{\Lambda}_1(0,1)$, 
$\hat{\Lambda}_2(0,1)=-\hat{\Lambda}_1(0,1)$, and $\hat{\Lambda}_2(1,1)=\hat{\Lambda}_1(0,1)$. Thus, by denoting $a = \hat{\Lambda}_1(0,1)$, we have
\begin{align*}
\hat{\Lambda}(x_1x_2,y_1y_2) = 
\left[
\begin{array}{cccc}
0 & - a & a & 0 \\
0 & a & a & 2 a \\
0 & - a & - a & - 2 a \\
0 & a & - a & 0
\end{array}
\right],
\end{align*}
and the assumption of Proposition \ref{prop:no-quantization} is satisfied. Thus, we need to set $U=X$ in the quantization-binning bound to attain the Stein exponent.

Since $p_1=q_2$ and $p_2=q_1$, the conditional entropies $H_P(X|Y)$ and $H_Q(X|Y)$ with respect to $P_{XY}$ and $Q_{XY}$ satisfy
$H_P(X|Y)=H_Q(X|Y)$. Thus, we find that the inner minimization of the binning bound is
attained by $P_{\tilde{X}\tilde{Y}} = Q_{XY}$, and 
\begin{align*}
\lefteqn{ E_{\mathtt{b}}(R) } \\
&= \min\big[ |R- h(p_1) - h(p_2)|^+, D(p_1\|q_1) + D(p_2\|q_2) \big].
\end{align*}
Furthermore, this bound implies that the critical rate attainable by the SHA scheme is
\begin{align} \label{eq:critical-product-SHA}
R_{\mathtt{cr}} \le h(p_1) + h(p_2) + D(p_1\|q_1) + D(p_2 \| q_2).
\end{align}

\section{Sequential Scheme for Product of BSDS} \label{section:sequential}

Now, we describe a modified version of SHA scheme for the product of BSDS. 
For $i=1,2$, let $T_{\mathtt{b}}^{[i]} = (\varphi^{[i]},\psi^{[i]})$ be the SHA scheme (without quantization) for $P_{X_iY_i}$ versus $Q_{X_iY_i}$.
For a given rate $R$, we split the rate as $R=R_1 + R_2$, and consider sequential scheme $\tilde{T}_{\mathtt{b}} = (\varphi,\psi)$ 
constructed from $(T_{\mathtt{b}}^{[1]},T_{\mathtt{b}}^{[2]})$ as follows.
Upon observing $\bm{x} = (\bm{x}_1,\bm{x}_2)$,
the sender transmit $m_1 = \varphi^{[1]}(\bm{x}_1)$ and $m_2 = \varphi^{[2]}(\bm{x}_2)$.\footnote{Note that $m_1$ and $m_2$ include
indices of random binning as well as marginal types of $\bm{x}_1$ and $\bm{x}_2$.}
Then, upon receiving $(m_1,m_2)$ and observing $\bm{y}=(\bm{y}_1,\bm{y}_2)$, the receiver decides 
$\psi(m_1,m_2,\bm{y}) = \mathtt{H}_0$ if $\psi^{[1]}(m_1,\bm{y}_1) = \mathtt{H}_0$ and $\psi^{[2]}(m_2,\bm{y}_2) = \mathtt{H}_0$.
The type I and type II error probabilities of this sequential scheme $\tilde{T}_{\mathtt{b}}$ can be evaluated as 
\begin{align*} 
\lefteqn{ \alpha[\tilde{T}_{\mathtt{b}}] } \\ 
&= P\bigg( \psi^{[1]}(\varphi^{[1]}(X_1^n),Y_1^n) = \mathtt{H}_1 \vee \psi^{[2]}(\varphi^{[2]}(X_2^n),Y_2^n) = \mathtt{H}_1 \bigg) \nonumber \\
&\le \sum_{i=1}^2 P\bigg( \psi^{[i]}(\varphi^{[i]}(X_i^n),Y_i^n) = \mathtt{H}_1 \bigg)
\end{align*}
by the union bound, and
\begin{align*} 
\lefteqn{ \beta[\tilde{T}_{\mathtt{b}}] } \\ 
&= Q\bigg(  \psi^{[1]}(\varphi^{[1]}(X_1^n),Y_1^n) = \mathtt{H}_0,~ \psi^{[2]}(\varphi^{[2]}(X_2^n),Y_2^n) = \mathtt{H}_0 \bigg) \nonumber \\
&= \prod_{i=1}^2 Q\bigg( \psi^{[i]}(\varphi^{[i]}(X_i^n),Y_i^n) = \mathtt{H}_0 \bigg).
\end{align*}
Thus, if the type I error probabilities of each scheme is vanishing, then the type I error probability $\alpha[\tilde{T}_{\mathtt{b}}]$
is also vanishing. On the other hand, the type II exponent of the sequential scheme is the summation of
the type II exponent of each scheme.

As in Section \ref{section:product-BSDS}, suppose that $p_1=q_2$ and $p_2=q_1$. Furthermore, let $h(q_1) < h(p_1)$. Then, 
from the above argument,
and \eqref{eq:b-Exponent-BSDS}, we  find that the following exponent is attainable by the sequential scheme: 
\begin{align}
\tilde{E}_{\mathtt{b}}(R) = D(p_1 \| q_1) + \min[ |R_2 - h(p_2)|^+, D(p_2 \| q_2) ]
\end{align}
for $R_1 \ge h(p_1)$ and $R_2 \ge h(p_2)$.
By setting $R_1 = h(p_1)$ and $R_2 = h(p_2) + D(p_2 \| q_2)$, we can derive the following bound on the critical rate:
\begin{align} \label{eq:critical-sequential-BSDS}
R_{\mathtt{cr}} \le h(p_1) + h(p_2) + D(p_2 \| q_2).
\end{align}
Interestingly, the bound on the critical rate in \eqref{eq:critical-sequential-BSDS} improves upon
the bound in \eqref{eq:critical-product-SHA}. 

From an operational point of view, this improvement 
can be explained as follows. In the SHA scheme (without quantization), we use the minimum entropy decoder to compute an
estimate $\hat{\bm{x}}$ of $X^n =\bm{x}$; then, if the joint type $\san{t}_{\hat{\bm{x}} \bm{y}}$ is close to $P_{XY}$, we accept the null
hypothesis. When the empirical conditional entropy $H(\bm{x}|\bm{y})$ is such that $H_P(X|Y) \gnsim H(\bm{x}|\bm{y})$,
even if there exists $\hat{\bm{x}}$ satisfying $H(\bm{x}|\bm{y}) \ge H(\hat{\bm{x}}|\bm{y})$, the type II testing error does not occur
though the receiver may erroneously compute $\hat{\bm{x}}\neq \bm{x}$.\footnote{Note that $H_P(X|Y) \gnsim H(\bm{x}|\bm{y}) \ge H(\hat{\bm{x}}|\bm{y})$
implies that $\san{t}_{\hat{\bm{x}} \bm{y}}$ is not close to $P_{XY}$.} 
Consequently, the type II testing error may occur 
only if $(X^n,Y^n) = (\bm{x},\bm{y})$ satisfies $H(\bm{x}|\bm{y}) \gtrsim H_P(X|Y)$.

For the product of  BSDSs, when the types of modulo sums $\bm{x}_1 + \bm{y}_1$ and $\bm{x}_2 + \bm{y}_2$
are $H(\san{t}_{\bm{x}_1 + \bm{y}_1}) \simeq h(\tau_1)$ and $H(\san{t}_{\bm{x}_2+\bm{y}_2}) \simeq h(\tau_2)$,
the type II testing error may occur only if $h(\tau_1) + h(\tau_2) \gtrsim h(p_1) + h(p_2)$; see Fig. \ref{Fig:range-product1}(a).
In the modified SHA scheme, $\bm{x}_1$ and $\bm{x}_2$ are binned and estimated separately. Because of this, 
the constraint of type II testing error event become more stringent, i.e., $h(\tau_1) \gtrsim h(p_1)$ and $h(\tau_2) \gtrsim h(p_2)$; see Fig.~Fig. \ref{Fig:range-product1}(b). 

\begin{figure}[t]
\centering{
 \subfloat[][]{\includegraphics[width=.4\textwidth]{./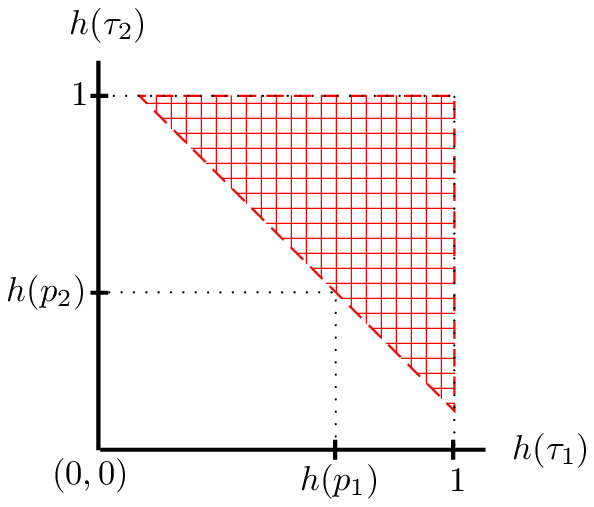} \label{Fig:range-product1}}  \quad
 \subfloat[][]{\includegraphics[width=.4\textwidth]{./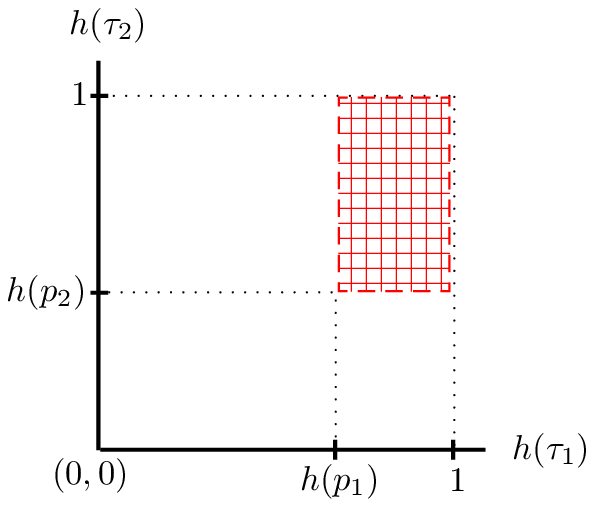} \label{Fig:range-product2}}
\caption{The meshed area describes the range of $h(\tau_1)$ and $h(\tau_2)$ such that the type II testing error may occur when 
$H(\san{t}_{\bm{x}_1+\bm{y}_1}) \simeq h(\tau_1)$ and $H(\san{t}_{\bm{x}_2+\bm{y}_2}) \simeq h(\tau_2)$ for (a) SHA scheme and (b) modified SHA scheme.}
\label{Fig:range-product1}
}
\end{figure}

\section{Discussion}

In this paper, we have developed tools to evaluate the critical rate attainable by the SHA scheme,
and exemplified that the SHA scheme is sub-optimal for a product of BSDSs.
A future problem is to generalize the idea of sequential scheme in Section \ref{section:sequential}  
and develop a new achievability scheme for the distributed hypothesis testing.
Another future problem is a potential utility of the scheme in \cite{WeiKoc:19};
as we mentioned in Section \ref{section:introduction}, it has a potential to improve upon the SHA scheme,
but strict improvement is not clear since the expression involves complicated optimization over multiple parameters.

Lastly, it should be pointed out that certain kinds of ``structured" random coding is known to improve upon the naive random coding
in the context of the mismatched decoding problem; eg.~see \cite{Lapidoth96, ScaBarMarFab15} and references therein. 
Even though there is no direct connection between the two problems, 
it might be interesting to pursue a mathematical connection between these problems. 

\section*{Acknowledgment}

The author appreciates Marat Burnashev for valuable discussions at an early stage of the work.
The author also thanks Te Sun Han and Yasutada Oohama for fruitful discussions. 

\newpage

\bibliographystyle{../../09-04-17-bibtex/IEEEtran}
\bibliography{../../09-04-17-bibtex/reference.bib}
\end{document}